\documentclass{article}
\usepackage{chl-alg}
\usepackage{vaucanson-g}
\usepackage{pstricks,pst-node}

\newtheorem{theorem}{Theorem}
\newtheorem{proposition}[theorem]{Proposition}

\newenvironment{proof}{\noindent\textbf{Proof.} }{\hfill\rule{2mm}{2mm}\medskip}

\def\dd{\mathinner{\ldotp\ldotp}}   
\newcommand{\sle}{<\hspace{-4pt}<}  
\def\sa#1{\mbox{\tt #1}}
\def\Lyn{\mathrm{Lyn}} 
\def\Rank{\mathrm{Rank}} 
\def\LCE{\mathrm{LCE}} 
\def\LeftChild{\texttt{leftchild}}
\def\RightChild{\texttt{rightchild}}
\def\Parent{\texttt{Parent}}
\def\NNS{\mathrm{NNS}}
\renewcommand{\fbox}[1]{#1}

\title{Cartesian trees and Lyndon trees}

\author{%
Maxime Crochemore%
\thanks{\texttt{Maxime.Crochemore@kcl.ac.uk}, King's College London, UK
 and Universit\'e Paris-Est, France.}
\and
Lu{\'\i}s M. S. Russo%
\thanks{\texttt{luis.russo@tecnico.ulisboa.pt},
  INESC-ID and the Department of Computer Science and
  Engineering,  Instituto Superior T\'{e}cnico, Universidade de
    Lisboa.
}
}

\begin{document}
\maketitle

\begin{abstract}
The article describes the structural and algorithmic relations between
Cartesian trees and Lyndon Trees. This leads to a uniform presentation of
the Lyndon table of a word corresponding to the Next Nearest Smaller table
of a sequence of numbers. It shows how to efficiently compute runs, that
is, maximal periodicities occurring in a word.
\end{abstract}

\section{Cartesian and Lyndon trees}\label{sect:intro}

The Cartesian tree, introduced by Vuillemin \cite{Vuillemin80} is a binary tree
associated with a sequence of numbers that label its nodes. It is both a
heap, with the smallest element at the root, and the sequence is recovered
during a symmetric traversal of the tree.

Cartesian tree have a series of applications in addition to that introduced
by Vuillemin \cite{Vuillemin80} on two-dimensional images. To quote a few of
them, they are used for range searching to implement range minimum queries in
a sequence of numbers through the help of Lowest Common Ancestor queries in
the Cartesian tree of the sequence \cite{GabowBT84}. They are also part of
sorting methods that want to take advantage of partially sorted subsequences
(see for example \cite{LevcopoulosP89}).

Lyndon trees are associated with Lyndon words, words that are
lexicographically smaller than all their proper non-empty suffixes (see
\cite{Lothaire83} and \cite{BerstelLRS08}). They also have several
interesting algorithmic applications and attracted much interest in
connection with the detection of runs (maximal periodicities) in words. The
notion of Lyndon roots of runs, introduced for cubic runs in
\cite{CrochemoreIKRRW12}, has led to the property that there is linear number
of square runs in a word. Originally conjectured by Kolpakov and
Kucherov~\cite{KolpakovK99}, it has eventually been proved by Bannai et
al.~\cite{BannaiIINTT15}. They also show how to compute efficiently all the
runs using implicitly the notion of Lyndon table (array), which is a side
product of the Lyndon tree construction.

This article may be viewed as a follow-up of the publication by Hohlweg and
Reutenauer~\cite{HohlwegR03} in which they show the link between the two
types of trees. The bridge between them is a key property (stated in
Proposition~\ref{prop-main}) that relates a local condition on the factors
of the word to a global condition on its suffixes. It implies the structure
of a Lyndon tree is the same as the Cartesian tree of ranks of the
associated word suffixes.

\section{Cartesian tree}\label{sect:ct}

Let $x=(x[0], x[1], \dots, x[n-1])$ be a sequence of numbers of length $n$.
Below is a standard algorithm for computing its associated Cartesian tree.
Nodes of the tree are identified with positions of numbers on the sequence
and are labelled by the numbers with $X$. To simplify the algorithm we
insert a sentinel into the original sequence, i.e., we add a number $x[n] =
-\infty$. The purpose of this number is that it is smaller than any other
number that already exists in the sequence.

The algorithm proceeds from right to left, instead of left to right as usual,
to fit with the Lyndon tree construction. One step $i$ is to go up the
leftmost path of the tree from $i+1$ to find where to insert the node $i$.
(The artificial node $n$ acts as a sentinel to simplify the design.) During
the traversal, going to the parent of node $S$ is like going to the next
nearest value smaller than $x[i]$.

\medskip
\begin{algo}{CartesianTree}{x \textrm{ non-empty sequence of numbers of length } n}
    \SET{x[n]}{-\infty}
    \SET{X[n]}{x[n]}
    \SET{n.\texttt{LeftChild}}{\texttt{Null}}
    \DOFORD{i}{n-1}{0}
        \SET{S}{i+1}
        \DOWHILE{x[i]<X[S]} \label{algo-CT:5}
            \SET{S}{S.\Parent}
        \OD
        \SET{i.\RightChild}{S.\LeftChild}
        \SET{S.\LeftChild}{i}
    \OD
    \RETURN{\mbox{labelled built tree}}
\end{algo}

\medskip
The number of comparisons executed at line~\ref{algo-CT:5} is linear in $n$.
 Any comparison that yields $x[i] \geq X[S]$ means that the
\texttt{while} fails and therefore occurs at most once for each
$i$. Moreover the comparisons that yield $x[i] < X[S]$ for some position
$j$, i.e., $x[j] = X[S]$ implies that position $j$ will no longer be
involved in a latter comparison. An alternative view of this process is a
that the consequent $S \leftarrow S.\Parent$ assignment moves upward on the
rightmost branch of the current tree. Thus, the running time is $O(n)$.

The picture displays the Cartesion tree of the sequence of numbers:\\
$(7, 15, 12, 4, 10, 1, 5, 13, 6, 14, 11, 3, 9, 0, 2, 8, -\infty)$.

\newpsstyle{dashed}{
  linestyle=dashed,
  dash=3pt,
  nodesepA=5pt
}
\newpsstyle{cppointer}{
  arrowsize=8pt,
  linewidth=2pt,
  border=4pt
}

\newpsstyle{ppointer}{
  arrowsize=8pt,
  linewidth=2pt
}

\def\rx{0.6 }
\def\ry{0.9 }

\SpecialCoor

\begin{pspicture}[showgrid=false](-1,-8.5)(11,1)
  \psframe[shadow=true](! -.5 \rx mul -.3)(! 17.5 \rx mul .3)
  \psline(! 0 0.5 add \rx mul -.3)(! 0 0.5 add \rx mul .3)
  \psline(! 1 0.5 add \rx mul -.3)(! 1 0.5 add \rx mul .3)
  \psline(! 2 0.5 add \rx mul -.3)(! 2 0.5 add \rx mul .3)
  \psline(! 3 0.5 add \rx mul -.3)(! 3 0.5 add \rx mul .3)
  \psline(! 4 0.5 add \rx mul -.3)(! 4 0.5 add \rx mul .3)
  \psline(! 5 0.5 add \rx mul -.3)(! 5 0.5 add \rx mul .3)
  \psline(! 6 0.5 add \rx mul -.3)(! 6 0.5 add \rx mul .3)
  \psline(! 7 0.5 add \rx mul -.3)(! 7 0.5 add \rx mul .3)
  \psline(! 8 0.5 add \rx mul -.3)(! 8 0.5 add \rx mul .3)
  \psline(! 9 0.5 add \rx mul -.3)(! 9 0.5 add \rx mul .3)
  \psline(! 10 0.5 add \rx mul -.3)(! 10 0.5 add \rx mul .3)
  \psline(! 11 0.5 add \rx mul -.3)(! 11 0.5 add \rx mul .3)
  \psline(! 12 0.5 add \rx mul -.3)(! 12 0.5 add \rx mul .3)
  \psline(! 13 0.5 add \rx mul -.3)(! 13 0.5 add \rx mul .3)
  \psline(! 14 0.5 add \rx mul -.3)(! 14 0.5 add \rx mul .3)
  \psline(! 15 0.5 add \rx mul -.3)(! 15 0.5 add \rx mul .3)
  \rput(-0.7,0){$x$\ :}

  \rput(! 0 \rx mul 0){\rnode{N0}{7}}
  \rput(! 1 \rx mul 0){\rnode{N1}{15}}
  \rput(! 2 \rx mul 0){\rnode{N2}{12}}
  \rput(! 3 \rx mul 0){\rnode{N3}{4}}
  \rput(! 4 \rx mul 0){\rnode{N4}{10}}
  \rput(! 5 \rx mul 0){\rnode{N5}{1}}
  \rput(! 6 \rx mul 0){\rnode{N6}{5}}
  \rput(! 7 \rx mul 0){\rnode{N7}{13}}
  \rput(! 8 \rx mul 0){\rnode{N8}{6}}
  \rput(! 9 \rx mul 0){\rnode{N9}{4}}
  \rput(! 10 \rx mul 0){\rnode{N10}{11}}
  \rput(! 11 \rx mul 0){\rnode{N11}{3}}
  \rput(! 12 \rx mul 0){\rnode{N12}{9}}
  \rput(! 13 \rx mul 0){\rnode{N13}{0}}
  \rput(! 14 \rx mul 0){\rnode{N14}{2}}
  \rput(! 15 \rx mul 0){\rnode{N15}{8}}
  \rput(! 16.5 \rx mul 0){\rnode{N16}{$-\infty$}}

\rput(0,-1.5){
  \rput(! 0 \rx mul -4 \ry mul){\ovalnode{T0}{7}}
  \rput(! 1 \rx mul -6 \ry mul){\ovalnode{T1}{15}}
  \rput(! 2 \rx mul -5 \ry mul){\ovalnode{T2}{12}}
  \rput(! 3 \rx mul -3 \ry mul){\ovalnode{T3}{4}}
  \rput(! 4 \rx mul -4 \ry mul){\ovalnode{T4}{10}}
  \rput(! 5 \rx mul -2 \ry mul){\ovalnode{T5}{1}}
  \rput(! 6 \rx mul -4 \ry mul){\ovalnode{T6}{5}}
  \rput(! 7 \rx mul -6 \ry mul){\ovalnode{T7}{13}}
  \rput(! 8 \rx mul -5 \ry mul){\ovalnode{T8}{6}}
  \rput(! 9 \rx mul -7 \ry mul){\ovalnode{T9}{14}}
  \rput(! 10 \rx mul -6 \ry mul){\ovalnode{T10}{11}}
  \rput(! 11 \rx mul -3 \ry mul){\ovalnode{T11}{3}}
  \rput(! 12 \rx mul -4 \ry mul){\ovalnode{T12}{9}}
  \rput(! 13 \rx mul -1 \ry mul){\ovalnode{T13}{0}}
  \rput(! 14 \rx mul -2 \ry mul){\ovalnode{T14}{2}}
  \rput(! 15 \rx mul -3 \ry mul){\ovalnode{T15}{8}}
  \rput(! 16.5 \rx mul -0 \ry mul){\ovalnode{T16}{$-\infty$}}
  }

  \ncline[style=dashed]{N0}{T0}
  \ncline[style=dashed]{N1}{T1}
  \ncline[style=dashed]{N2}{T2}
  \ncline[style=dashed]{N3}{T3}
  \ncline[style=dashed]{N4}{T4}
  \ncline[style=dashed]{N5}{T5}
  \ncline[style=dashed]{N6}{T6}
  \ncline[style=dashed]{N7}{T7}
  \ncline[style=dashed]{N8}{T8}
  \ncline[style=dashed]{N9}{T9}
  \ncline[style=dashed]{N10}{T10}
  \ncline[style=dashed]{N11}{T11}
  \ncline[style=dashed]{N12}{T12}
  \ncline[style=dashed]{N13}{T13}
  \ncline[style=dashed]{N14}{T14}
  \ncline[style=dashed]{N15}{T15}
  \ncline[style=dashed]{N16}{T16}


  \ncline[style=cppointer]{-}{T0}{T3}
  \ncline[style=cppointer]{-}{T1}{T2}
  \ncline[style=cppointer]{-}{T2}{T0}
  \ncline[style=cppointer]{-}{T3}{T5}
  \ncline[style=cppointer]{-}{T4}{T3}
  \ncline[style=cppointer]{-}{T5}{T13}
  \ncline[style=cppointer]{-}{T6}{T11}
  \ncline[style=cppointer]{-}{T7}{T8}
  \ncline[style=cppointer]{-}{T8}{T6}
  \ncline[style=cppointer]{-}{T9}{T10}
  \ncline[style=cppointer]{-}{T10}{T8}
  \ncline[style=cppointer]{-}{T11}{T5}
  \ncline[style=cppointer]{-}{T12}{T11}
  \ncline[style=cppointer]{-}{T13}{T16}
  \ncline[style=cppointer]{-}{T14}{T13}
  \ncline[style=cppointer]{-}{T15}{T14}
  %
  %
    \psset{shadow=true}
\rput(0,-1.5){
  \rput(! 0 \rx mul -4 \ry mul){\ovalnode{T0}{7}}
  \rput(! 1 \rx mul -6 \ry mul){\ovalnode{T1}{15}}
  \rput(! 2 \rx mul -5 \ry mul){\ovalnode{T2}{12}}
  \rput(! 3 \rx mul -3 \ry mul){\ovalnode{T3}{4}}
  \rput(! 4 \rx mul -4 \ry mul){\ovalnode{T4}{10}}
  \rput(! 5 \rx mul -2 \ry mul){\ovalnode{T5}{1}}
  \rput(! 6 \rx mul -4 \ry mul){\ovalnode{T6}{5}}
  \rput(! 7 \rx mul -6 \ry mul){\ovalnode{T7}{13}}
  \rput(! 8 \rx mul -5 \ry mul){\ovalnode{T8}{6}}
  \rput(! 9 \rx mul -7 \ry mul){\ovalnode{T9}{14}}
  \rput(! 10 \rx mul -6 \ry mul){\ovalnode{T10}{11}}
  \rput(! 11 \rx mul -3 \ry mul){\ovalnode{T11}{3}}
  \rput(! 12 \rx mul -4 \ry mul){\ovalnode{T12}{9}}
  \rput(! 13 \rx mul -1 \ry mul){\ovalnode{T13}{0}}
  \rput(! 14 \rx mul -2 \ry mul){\ovalnode{T14}{2}}
  \rput(! 15 \rx mul -3 \ry mul){\ovalnode{T15}{8}}
  \rput(! 16.5 \rx mul -0 \ry mul){\ovalnode{T16}{$-\infty$}}
  }
  \psset{shadow=false}

  %
  \ncline[style=ppointer]{-}{T0}{T3}
  \ncline[style=ppointer]{-}{T1}{T2}
  \ncline[style=ppointer]{-}{T2}{T0}
  \ncline[style=ppointer]{-}{T3}{T5}
  \ncline[style=ppointer]{-}{T4}{T3}
  \ncline[style=ppointer]{-}{T5}{T13}
  \ncline[style=ppointer]{-}{T6}{T11}
  \ncline[style=ppointer]{-}{T7}{T8}
  \ncline[style=ppointer]{-}{T8}{T6}
  \ncline[style=ppointer]{-}{T9}{T10}
  \ncline[style=ppointer]{-}{T10}{T8}
  \ncline[style=ppointer]{-}{T11}{T5}
  \ncline[style=ppointer]{-}{T12}{T11}
  \ncline[style=ppointer]{-}{T13}{T16}
  \ncline[style=ppointer]{-}{T14}{T13}
  \ncline[style=ppointer]{-}{T15}{T14}

\end{pspicture}

The next picture exemplifies the \textsc{CartesianTree} Algorithm by
inserting the number $5$ into an Cartesian tree. At each step the algorithm
considers the leftmost branch of the tree, highlighted by the arrows in the
picture. These arrows represent the \texttt{parent} pointers that are used
by the inner cycle while $x[i] < X[S]$. In this example the while guard is
true twice when $5 < 13$ and $5 < 6$. The final comparison yields $5 > 3$
and therefore the cycle stops, notice that the \texttt{parent} pointers of
$13$ and $13$ are not represented by arrows in the second tree, as they are
no longer part of the leftmost branch.

\begin{pspicture}[showgrid=false](-1,-13.0)(11,1)
  \psframe[shadow=true](! -.5 \rx mul -.3)(! 17.5 \rx mul .3)
  \psline(! 2 0.5 add \rx mul -.3)(! 2 0.5 add \rx mul .3)
  \psline(! 5 0.5 add \rx mul -.3)(! 5 0.5 add \rx mul .3)
  \psline(! 7 0.5 add \rx mul -.3)(! 7 0.5 add \rx mul .3)
  \psline(! 8 0.5 add \rx mul -.3)(! 8 0.5 add \rx mul .3)
  \psline(! 9 0.5 add \rx mul -.3)(! 9 0.5 add \rx mul .3)
  \psline(! 10 0.5 add \rx mul -.3)(! 10 0.5 add \rx mul .3)
  \psline(! 11 0.5 add \rx mul -.3)(! 11 0.5 add \rx mul .3)
  \psline(! 12 0.5 add \rx mul -.3)(! 12 0.5 add \rx mul .3)
  \psline(! 13 0.5 add \rx mul -.3)(! 13 0.5 add \rx mul .3)
  \psline(! 14 0.5 add \rx mul -.3)(! 14 0.5 add \rx mul .3)
  \psline(! 15 0.5 add \rx mul -.3)(! 15 0.5 add \rx mul .3)
  \rput(! 4.2 \rx mul 0){$= x[i]$}
  \rput(-0.7,0){$x$\ :}
  \rput(! 1 \rx mul 0){\ldots}
  \rput(! 3 \rx mul 0){\rnode{N6}{5}}
  \rput(! 7 \rx mul 0){\rnode{N7}{13}}
  \rput(! 8 \rx mul 0){\rnode{N8}{6}}
  \rput(! 9 \rx mul 0){\rnode{N9}{4}}
  \rput(! 10 \rx mul 0){\rnode{N10}{11}}
  \rput(! 11 \rx mul 0){\rnode{N11}{3}}
  \rput(! 12 \rx mul 0){\rnode{N12}{9}}
  \rput(! 13 \rx mul 0){\rnode{N13}{0}}
  \rput(! 14 \rx mul 0){\rnode{N14}{2}}
  \rput(! 15 \rx mul 0){\rnode{N15}{8}}
  \rput(! 16.5 \rx mul 0){\rnode{N16}{$-\infty$}}


\rput(0,-7.0){
  \rput(! 3 \rx mul -3 \ry mul){\ovalnode{T6}{5}}
  \rput(! 7 \rx mul -5 \ry mul){\ovalnode{T7}{13}}
  \rput(! 8 \rx mul -4 \ry mul){\ovalnode{T8}{6}}
  \rput(! 9 \rx mul -6 \ry mul){\ovalnode{T9}{14}}
  \rput(! 10 \rx mul -5 \ry mul){\ovalnode{T10}{11}}
  \rput(! 11 \rx mul -2 \ry mul){\ovalnode{T11}{3}}
  \rput(! 12 \rx mul -3 \ry mul){\ovalnode{T12}{9}}
  \rput(! 13 \rx mul -1 \ry mul){\ovalnode{T13}{0}}
  \rput(! 14 \rx mul -2 \ry mul){\ovalnode{T14}{2}}
  \rput(! 15 \rx mul -3 \ry mul){\ovalnode{T15}{8}}
  \rput(! 16.5 \rx mul -0 \ry mul){\ovalnode{T16}{$-\infty$}}
  }

  %
  \ncline[style=dashed]{N6}{T6}
  \ncline[style=dashed]{N7}{T7}
  \ncline[style=dashed]{N8}{T8}
  \ncline[style=dashed]{N9}{T9}
  \ncline[style=dashed]{N10}{T10}
  \ncline[style=dashed]{N11}{T11}
  \ncline[style=dashed]{N12}{T12}
  \ncline[style=dashed]{N13}{T13}
  \ncline[style=dashed]{N14}{T14}
  \ncline[style=dashed]{N15}{T15}
  \ncline[style=dashed]{N16}{T16}


  \ncline[style=cppointer]{->}{T6}{T11}
  \ncline[style=cppointer]{-}{T7}{T8}
  \ncline[style=cppointer]{-}{T8}{T6}
  \ncline[style=cppointer]{-}{T9}{T10}
  \ncline[style=cppointer]{-}{T10}{T8}
  \ncline[style=cppointer]{->}{T11}{T13}
  \ncline[style=cppointer]{-}{T12}{T11}
  \ncline[style=cppointer]{->}{T13}{T16}
  \ncline[style=cppointer]{-}{T14}{T13}
  \ncline[style=cppointer]{-}{T15}{T14}
  %
  %
    \psset{shadow=true}
\rput(0,-7.0){
  \rput(! 3 \rx mul -3 \ry mul){\ovalnode{T6}{5}}
  \rput(! 7 \rx mul -5 \ry mul){\ovalnode{T7}{13}}
  \rput(! 8 \rx mul -4 \ry mul){\ovalnode{T8}{6}}
  \rput(! 9 \rx mul -6 \ry mul){\ovalnode{T9}{14}}
  \rput(! 10 \rx mul -5 \ry mul){\ovalnode{T10}{11}}
  \rput(! 11 \rx mul -2 \ry mul){\ovalnode{T11}{3}}
  \rput(! 12 \rx mul -3 \ry mul){\ovalnode{T12}{9}}
  \rput(! 13 \rx mul -1 \ry mul){\ovalnode{T13}{0}}
  \rput(! 14 \rx mul -2 \ry mul){\ovalnode{T14}{2}}
  \rput(! 15 \rx mul -3 \ry mul){\ovalnode{T15}{8}}
  \rput(! 16.5 \rx mul -0 \ry mul){\ovalnode{T16}{$-\infty$}}
  }
  \psset{shadow=false}

  \ncline[style=ppointer]{->}{T6}{T11}
  \ncline[style=ppointer]{-}{T7}{T8}
  \ncline[style=ppointer]{-}{T8}{T6}
  \ncline[style=ppointer]{-}{T9}{T10}
  \ncline[style=ppointer]{-}{T10}{T8}
  \ncline[style=ppointer]{->}{T11}{T13}
  \ncline[style=ppointer]{-}{T12}{T11}
  \ncline[style=ppointer]{->}{T13}{T16}
  \ncline[style=ppointer]{-}{T14}{T13}
  \ncline[style=ppointer]{-}{T15}{T14}

  %

\rput(0,-1.5){
  \rput(! 3 \rx mul -2 \ry mul){\ovalnode{T6}{5}}
  \rput(! 7 \rx mul -4 \ry mul){\ovalnode{T7}{13}}
  \rput(! 8 \rx mul -3 \ry mul){\ovalnode{T8}{6}}
  \rput(! 9 \rx mul -5 \ry mul){\ovalnode{T9}{14}}
  \rput(! 10 \rx mul -4 \ry mul){\ovalnode{T10}{11}}
  \rput(! 11 \rx mul -2 \ry mul){\ovalnode{T11}{3}}
  \rput(! 12 \rx mul -3 \ry mul){\ovalnode{T12}{9}}
  \rput(! 13 \rx mul -1 \ry mul){\ovalnode{T13}{0}}
  \rput(! 14 \rx mul -2 \ry mul){\ovalnode{T14}{2}}
  \rput(! 15 \rx mul -3 \ry mul){\ovalnode{T15}{8}}
  \rput(! 16.5 \rx mul -0 \ry mul){\ovalnode{T16}{$-\infty$}}
  }


  \ncline[linewidth=1pt,border=4pt]{-**}{T6}{T7}
  \ncline[linewidth=1pt,border=4pt]{-**}{T6}{T8}
  \ncline[linewidth=1pt,border=4pt]{-**}{T6}{T11}
  \ncline[style=cppointer]{->}{T7}{T8}
  \ncline[style=cppointer]{->}{T8}{T11}
  \ncline[style=cppointer]{-}{T9}{T10}
  \ncline[style=cppointer]{-}{T10}{T8}
  \ncline[style=cppointer]{->}{T11}{T13}
  \ncline[style=cppointer]{-}{T12}{T11}
  \ncline[style=cppointer]{->}{T13}{T16}
  \ncline[style=cppointer]{-}{T14}{T13}
  \ncline[style=cppointer]{-}{T15}{T14}
  %
  %
    \psset{shadow=true}
\rput(0,-1.5){
  \rput(! 3 \rx mul -2 \ry mul){\ovalnode{T6}{5}}
  \rput(! 7 \rx mul -4 \ry mul){\ovalnode{T7}{13}}
  \rput(! 8 \rx mul -3 \ry mul){\ovalnode{T8}{6}}
  \rput(! 9 \rx mul -5 \ry mul){\ovalnode{T9}{14}}
  \rput(! 10 \rx mul -4 \ry mul){\ovalnode{T10}{11}}
  \rput(! 11 \rx mul -2 \ry mul){\ovalnode{T11}{3}}
  \rput(! 12 \rx mul -3 \ry mul){\ovalnode{T12}{9}}
  \rput(! 13 \rx mul -1 \ry mul){\ovalnode{T13}{0}}
  \rput(! 14 \rx mul -2 \ry mul){\ovalnode{T14}{2}}
  \rput(! 15 \rx mul -3 \ry mul){\ovalnode{T15}{8}}
  \rput(! 16.5 \rx mul -0 \ry mul){\ovalnode{T16}{$-\infty$}}
  }
  \psset{shadow=false}

  \ncline[linewidth=1pt]{-}{T6}{T7} \ncput*{$<$}
  \ncline[linewidth=1pt]{-}{T6}{T8} \ncput*{$<$}
  \ncline[linewidth=1pt]{-}{T6}{T11} \ncput*{$>$}
  \ncline[style=ppointer]{-}{T7}{T8}
  \ncline[style=ppointer]{-}{T8}{T11}
  \ncline[style=ppointer]{-}{T9}{T10}
  \ncline[style=ppointer]{-}{T10}{T8}
  \ncline[style=ppointer]{-}{T11}{T13}
  \ncline[style=ppointer]{-}{T12}{T11}
  \ncline[style=ppointer]{-}{T13}{T16}
  \ncline[style=ppointer]{-}{T14}{T13}
  \ncline[style=ppointer]{-}{T15}{T14}

\end{pspicture}

This moving upwards process computed by the inner cycle, corresponds to
finding the Next nearest smaller value. In our example, when given the
number $5$ we searched the sequence until we reached the number $3$, note
that by moving upwards on the tree this process is faster than computing
linear scan from right to left, in particular we did not compare with the
numbers $14$ and $11$.

\paragraph{Next nearest smaller table}

The Next nearest smaller table $\NNS$ of a (non-empty) sequence $y$ of
numbers is defined as follows. For a position $i$ on $x$, $i=0,\dots,|x|-1$,
$\NNS[i]$ is the smallest position $j>i$ of an element $x[j]<x[i]$, or $n$ if
none exist:
\[\NNS[i] = \min\{j \mid x[j]<x[i]\}\cup\{n\}.\]

The following picture shows the $\NNS$ table illustrated over the Cartesian
tree. We show the corresponding table bellow the tree. Moreover each node
also shows an arrow to point to the corresponding Next nearest smaller
node. When a node is a left child of its parent the arrows are simply the
\texttt{parent} pointers. However when the node is a right child of its
parent then the arrows are shown with dashed lines and point to an ancestor
of the node that is to its right.

\def\rx{0.8 }

\begin{Scalebox}{0.7}
\begin{pspicture}[showgrid=false](-1.5,-9.5)(14.5,0.5)
  \psframe[shadow=true](! -.5 \rx mul -.3)(! 17.5 \rx mul .3)
  \psline(! 0 0.5 add \rx mul -.3)(! 0 0.5 add \rx mul .3)
  \psline(! 1 0.5 add \rx mul -.3)(! 1 0.5 add \rx mul .3)
  \psline(! 2 0.5 add \rx mul -.3)(! 2 0.5 add \rx mul .3)
  \psline(! 3 0.5 add \rx mul -.3)(! 3 0.5 add \rx mul .3)
  \psline(! 4 0.5 add \rx mul -.3)(! 4 0.5 add \rx mul .3)
  \psline(! 5 0.5 add \rx mul -.3)(! 5 0.5 add \rx mul .3)
  \psline(! 6 0.5 add \rx mul -.3)(! 6 0.5 add \rx mul .3)
  \psline(! 7 0.5 add \rx mul -.3)(! 7 0.5 add \rx mul .3)
  \psline(! 8 0.5 add \rx mul -.3)(! 8 0.5 add \rx mul .3)
  \psline(! 9 0.5 add \rx mul -.3)(! 9 0.5 add \rx mul .3)
  \psline(! 10 0.5 add \rx mul -.3)(! 10 0.5 add \rx mul .3)
  \psline(! 11 0.5 add \rx mul -.3)(! 11 0.5 add \rx mul .3)
  \psline(! 12 0.5 add \rx mul -.3)(! 12 0.5 add \rx mul .3)
  \psline(! 13 0.5 add \rx mul -.3)(! 13 0.5 add \rx mul .3)
  \psline(! 14 0.5 add \rx mul -.3)(! 14 0.5 add \rx mul .3)
  \psline(! 15 0.5 add \rx mul -.3)(! 15 0.5 add \rx mul .3)
  \rput(-0.7,0){$x$\ :}

  \rput(! 0 \rx mul 0){\rnode{N0}{7}}
  \rput(! 1 \rx mul 0){\rnode{N1}{15}}
  \rput(! 2 \rx mul 0){\rnode{N2}{12}}
  \rput(! 3 \rx mul 0){\rnode{N3}{4}}
  \rput(! 4 \rx mul 0){\rnode{N4}{10}}
  \rput(! 5 \rx mul 0){\rnode{N5}{1}}
  \rput(! 6 \rx mul 0){\rnode{N6}{5}}
  \rput(! 7 \rx mul 0){\rnode{N7}{13}}
  \rput(! 8 \rx mul 0){\rnode{N8}{6}}
  \rput(! 9 \rx mul 0){\rnode{N9}{4}}
  \rput(! 10 \rx mul 0){\rnode{N10}{11}}
  \rput(! 11 \rx mul 0){\rnode{N11}{3}}
  \rput(! 12 \rx mul 0){\rnode{N12}{9}}
  \rput(! 13 \rx mul 0){\rnode{N13}{0}}
  \rput(! 14 \rx mul 0){\rnode{N14}{2}}
  \rput(! 15 \rx mul 0){\rnode{N15}{8}}
  \rput(! 16.5 \rx mul 0){\rnode{N16}{$-\infty$}}

\def\NNS{\mathrm{NNS}}

  \rput(! -1.0 -10 \ry mul){$\NNS[i]$}
  \rput(! 0 \rx mul -10 \ry mul){\rnode{L0}{3}}
  \rput(! 1 \rx mul -10 \ry mul){\rnode{L1}{2}}
  \rput(! 2 \rx mul -10 \ry mul){\rnode{L2}{3}}
  \rput(! 3 \rx mul -10 \ry mul){\rnode{L3}{5}}
  \rput(! 4 \rx mul -10 \ry mul){\rnode{L4}{5}}
  \rput(! 5 \rx mul -10 \ry mul){\rnode{L5}{13}}
  \rput(! 6 \rx mul -10 \ry mul){\rnode{L6}{11}}
  \rput(! 7 \rx mul -10 \ry mul){\rnode{L7}{8}}
  \rput(! 8 \rx mul -10 \ry mul){\rnode{L8}{11}}
  \rput(! 9 \rx mul -10 \ry mul){\rnode{L9}{10}}
  \rput(! 10 \rx mul -10 \ry mul){\rnode{L10}{11}}
  \rput(! 11 \rx mul -10 \ry mul){\rnode{L11}{13}}
  \rput(! 12 \rx mul -10 \ry mul){\rnode{L12}{13}}
  \rput(! 13 \rx mul -10 \ry mul){\rnode{L13}{16}}
  \rput(! 14 \rx mul -10 \ry mul){\rnode{L14}{16}}
  \rput(! 15 \rx mul -10 \ry mul){\rnode{L15}{16}}
  \rput(! 16.5 \rx mul -10 \ry mul){\rnode{L16}{}}

\rput(0,-1.5){
  \rput(! 0 \rx mul -4 \ry mul){\ovalnode{T0}{7}}
  \rput(! 1 \rx mul -6 \ry mul){\ovalnode{T1}{15}}
  \rput(! 2 \rx mul -5 \ry mul){\ovalnode{T2}{12}}
  \rput(! 3 \rx mul -3 \ry mul){\ovalnode{T3}{4}}
  \rput(! 4 \rx mul -4 \ry mul){\ovalnode{T4}{10}}
  \rput(! 5 \rx mul -2 \ry mul){\ovalnode{T5}{1}}
  \rput(! 6 \rx mul -4 \ry mul){\ovalnode{T6}{5}}
  \rput(! 7 \rx mul -6 \ry mul){\ovalnode{T7}{13}}
  \rput(! 8 \rx mul -5 \ry mul){\ovalnode{T8}{6}}
  \rput(! 9 \rx mul -7 \ry mul){\ovalnode{T9}{14}}
  \rput(! 10 \rx mul -6 \ry mul){\ovalnode{T10}{11}}
  \rput(! 11 \rx mul -3 \ry mul){\ovalnode{T11}{3}}
  \rput(! 12 \rx mul -4 \ry mul){\ovalnode{T12}{9}}
  \rput(! 13 \rx mul -1 \ry mul){\ovalnode{T13}{0}}
  \rput(! 14 \rx mul -2 \ry mul){\ovalnode{T14}{2}}
  \rput(! 15 \rx mul -3 \ry mul){\ovalnode{T15}{8}}
  \rput(! 16.5 \rx mul -0 \ry mul){\ovalnode{T16}{$-\infty$}}
  }

  \ncline[style=dashed]{N0}{T0}
  \ncline[style=dashed]{N1}{T1}
  \ncline[style=dashed]{N2}{T2}
  \ncline[style=dashed]{N3}{T3}
  \ncline[style=dashed]{N4}{T4}
  \ncline[style=dashed]{N5}{T5}
  \ncline[style=dashed]{N6}{T6}
  \ncline[style=dashed]{N7}{T7}
  \ncline[style=dashed]{N8}{T8}
  \ncline[style=dashed]{N9}{T9}
  \ncline[style=dashed]{N10}{T10}
  \ncline[style=dashed]{N11}{T11}
  \ncline[style=dashed]{N12}{T12}
  \ncline[style=dashed]{N13}{T13}
  \ncline[style=dashed]{N14}{T14}
  \ncline[style=dashed]{N15}{T15}
  \ncline[style=dashed]{N16}{T16}

  \ncline[linestyle=dotted,nodesepA=5pt]{L0}{T0}
  \ncline[linestyle=dotted,nodesepA=5pt]{L1}{T1}
  \ncline[linestyle=dotted,nodesepA=5pt]{L2}{T2}
  \ncline[linestyle=dotted,nodesepA=5pt]{L3}{T3}
  \ncline[linestyle=dotted,nodesepA=5pt]{L4}{T4}
  \ncline[linestyle=dotted,nodesepA=5pt]{L5}{T5}
  \ncline[linestyle=dotted,nodesepA=5pt]{L6}{T6}
  \ncline[linestyle=dotted,nodesepA=5pt]{L7}{T7}
  \ncline[linestyle=dotted,nodesepA=5pt]{L8}{T8}
  \ncline[linestyle=dotted,nodesepA=5pt]{L9}{T9}
  \ncline[linestyle=dotted,nodesepA=5pt]{L10}{T10}
  \ncline[linestyle=dotted,nodesepA=5pt]{L11}{T11}
  \ncline[linestyle=dotted,nodesepA=5pt]{L12}{T12}
  \ncline[linestyle=dotted,nodesepA=5pt]{L13}{T13}
  \ncline[linestyle=dotted,nodesepA=5pt]{L14}{T14}
  \ncline[linestyle=dotted,nodesepA=5pt]{L15}{T15}
  \ncline[linestyle=dotted,nodesepA=5pt]{L16}{T16}


  \ncline[style=cppointer]{->}{T0}{T3}
  \ncline[style=cppointer]{->}{T1}{T2}
  \ncline[style=cppointer]{-}{T2}{T0}
  \ncline[style=cppointer,linestyle=dotted]{->}{T2}{T3}
  \ncline[style=cppointer]{->}{T3}{T5}
  \ncline[style=cppointer]{-}{T4}{T3}
  \ncline[style=cppointer,linestyle=dotted]{->}{T4}{T5}
  \ncline[style=cppointer]{->}{T5}{T13}
  \ncline[style=cppointer]{->}{T6}{T11}
  \ncline[style=cppointer]{->}{T7}{T8}
  \ncline[style=cppointer]{-}{T8}{T6}
  \ncline[style=cppointer,linestyle=dotted]{->}{T8}{T11}
  \ncline[style=cppointer]{->}{T9}{T10}
  \ncline[style=cppointer]{-}{T10}{T8}
  \ncline[style=cppointer,linestyle=dotted]{->}{T10}{T11}
  \ncline[style=cppointer]{-}{T11}{T5}
  \ncline[style=cppointer,linestyle=dotted]{->}{T11}{T13}
  \ncline[style=cppointer]{-}{T12}{T11}
  \ncline[style=cppointer,linestyle=dotted]{->}{T12}{T13}
  \ncline[style=cppointer]{->}{T13}{T16}
  \ncline[style=cppointer]{-}{T14}{T13}
  \ncline[style=cppointer,linestyle=dotted]{->}{T14}{T16}
  \ncline[style=cppointer]{-}{T15}{T14}
  \ncline[style=cppointer,linestyle=dotted]{->}{T15}{T16}

  \rput(! -1.0 0.2 add \rx mul -9.4 \ry mul){\scriptsize $i$}

  \rput(! 0 0.2 add \rx mul -9.4 \ry mul){\scriptsize 0}
  \rput(!   1 0.2 add \rx mul -9.4 \ry mul){\scriptsize   1}
  \rput(!   2 0.2 add \rx mul -9.4 \ry mul){\scriptsize   2}
  \rput(!   3 0.2 add \rx mul -9.4 \ry mul){\scriptsize   3}
  \rput(!   4 0.2 add \rx mul -9.4 \ry mul){\scriptsize   4}
  \rput(!   5 0.2 add \rx mul -9.4 \ry mul){\scriptsize   5}
  \rput(!   6 0.2 add \rx mul -9.4 \ry mul){\scriptsize   6}
  \rput(!   7 0.2 add \rx mul -9.4 \ry mul){\scriptsize   7}
  \rput(!   8 0.2 add \rx mul -9.4 \ry mul){\scriptsize   8}
  \rput(!   9 0.2 add \rx mul -9.4 \ry mul){\scriptsize   9}
  \rput(!   10 0.2 add \rx mul -9.4 \ry mul){\scriptsize   10}
  \rput(!   11 0.2 add \rx mul -9.4 \ry mul){\scriptsize   11}
  \rput(!   12 0.2 add \rx mul -9.4 \ry mul){\scriptsize   12}
  \rput(!   13 0.2 add \rx mul -9.4 \ry mul){\scriptsize   13}
  \rput(!   14 0.2 add \rx mul -9.4 \ry mul){\scriptsize   14}
  \rput(!   15 0.2 add \rx mul -9.4 \ry mul){\scriptsize   15}
  \rput(!   16.5 0.2 add \rx mul -9.4 \ry mul){\scriptsize   16}

  %
    \psset{shadow=true}
\rput(0,-1.5){
  \rput(! 0 \rx mul -4 \ry mul){\ovalnode{T0}{7}}
  \rput(! 1 \rx mul -6 \ry mul){\ovalnode{T1}{15}}
  \rput(! 2 \rx mul -5 \ry mul){\ovalnode{T2}{12}}
  \rput(! 3 \rx mul -3 \ry mul){\ovalnode{T3}{4}}
  \rput(! 4 \rx mul -4 \ry mul){\ovalnode{T4}{10}}
  \rput(! 5 \rx mul -2 \ry mul){\ovalnode{T5}{1}}
  \rput(! 6 \rx mul -4 \ry mul){\ovalnode{T6}{5}}
  \rput(! 7 \rx mul -6 \ry mul){\ovalnode{T7}{13}}
  \rput(! 8 \rx mul -5 \ry mul){\ovalnode{T8}{6}}
  \rput(! 9 \rx mul -7 \ry mul){\ovalnode{T9}{14}}
  \rput(! 10 \rx mul -6 \ry mul){\ovalnode{T10}{11}}
  \rput(! 11 \rx mul -3 \ry mul){\ovalnode{T11}{3}}
  \rput(! 12 \rx mul -4 \ry mul){\ovalnode{T12}{9}}
  \rput(! 13 \rx mul -1 \ry mul){\ovalnode{T13}{0}}
  \rput(! 14 \rx mul -2 \ry mul){\ovalnode{T14}{2}}
  \rput(! 15 \rx mul -3 \ry mul){\ovalnode{T15}{8}}
  \rput(! 16.5 \rx mul -0 \ry mul){\ovalnode{T16}{$-\infty$}}
  }
  \psset{shadow=false}

  %
  \ncline[style=ppointer]{->}{T0}{T3}
  \ncline[style=ppointer]{->}{T1}{T2}
  \ncline[style=ppointer]{-}{T2}{T0}
  \ncline[style=ppointer,linestyle=dotted]{->}{T2}{T3}
  \ncline[style=ppointer]{->}{T3}{T5}
  \ncline[style=ppointer]{-}{T4}{T3}
  \ncline[style=ppointer,linestyle=dotted]{->}{T4}{T5}
  \ncline[style=ppointer]{->}{T5}{T13}
  \ncline[style=ppointer]{->}{T6}{T11}
  \ncline[style=ppointer]{->}{T7}{T8}
  \ncline[style=ppointer]{-}{T8}{T6}
  \ncline[style=ppointer,linestyle=dotted]{->}{T8}{T11}
  \ncline[style=ppointer]{->}{T9}{T10}
  \ncline[style=ppointer]{-}{T10}{T8}
  \ncline[style=ppointer,linestyle=dotted]{->}{T10}{T11}
  \ncline[style=ppointer]{-}{T11}{T5}
  \ncline[style=ppointer,linestyle=dotted]{->}{T11}{T13}
  \ncline[style=ppointer]{-}{T12}{T11}
  \ncline[style=ppointer,linestyle=dotted]{->}{T12}{T13}
  \ncline[style=ppointer]{->}{T13}{T16}
  \ncline[style=ppointer]{-}{T14}{T13}
  \ncline[style=ppointer,linestyle=dotted]{->}{T14}{T16}
  \ncline[style=ppointer]{-}{T15}{T14}
  \ncline[style=ppointer,linestyle=dotted]{->}{T15}{T16}

\end{pspicture}
\end{Scalebox}

\medskip

It is interesting to notice that the algorithm used for constructing
Cartesian trees can be adapted to compute the $\NNS$ values. As illustrated
by the picture when a node is a left child then the $\NNS$ value is
actually a pointer to its parent on the tree. Now recall the
\textsc{CartesianTree} Algorithm and notice that whenever a value is
inserted in the tree it is always a left child and therefore its parent,
when it gets inserted, is the corresponding $\NNS$ value. Recall our
example when the value of $x[6] = 5$ is inserted into the tree it becomes
the left child of node $11$, with $x[11] = 3$, therefore $\NNS[6] =
11$. Note also in this example that when $x[6]$ is processed we have that
the node $8$ with $x[8] = 6$ was a left child of node $11$ with $x[11] = 3$
before the insertion but becomes a right child after the insertion. Still
the value $\NNS[8] = 11$ is not altered by this procedure.

The following modification of the \textsc{CartesianTree} Algorithm uses
this information to obtain the $\NNS$ values. Likewise it also runs in
linear time.

\medskip
\begin{algo}{NextNearestSmaller}{x \textrm{ non-empty sequence of numbers of length } n}
    \SET{(x[n],\NNS[n-1])}{(-\infty,n)}
    \DOFORD{i}{n-2}{0}
        \SET{j}{x[i+1]}
        \DOWHILE{x[i]<x[j]} \label{algo-NNS:5}
            \SET{j}{\NNS[j]}
        \OD
        \SET{\NNS[i]}{j}
    \OD
    \RETURN{\NNS}
\end{algo}

\section{Lyndon tree}\label{sect:lt}

Lyndon trees are associated with Lyndon words. Recall that a Lyndon word is a
non-empty word lexicographically smaller than all its proper non-empty
suffixes. The Lyndon tree of a Lyndon word $y$ corresponds recursively to the
following suffix (or standard) factorisation of $y$ when not reduced to a
single letter: $y$ can be written $uv$ where $v$ is chosen as the smallest
proper non-empty suffix of $y$. The word $u$ is then also a Lyndon word (see
\cite{Lothaire83}).

Algorithm \textsc{LyndonTree} builds the Lyndon tree of a Lyndon word $y$.
The hypothesis on $y$ is not a significant restriction because any word can
be turned into a Lyndon word by prepending to it a letter smaller than all
letters occurring in it. Otherwise, since any word factorises uniquely into
Lyndon words, the algorithm can produce the forest of Lyndon trees of the
factors.

The algorithm proceeds naturally from right to left on $y$ to find the
longest Lyndon word starting at each position $i$. It applies a known
property: if $u$ and $v$ are Lyndon words and $u<v$ then $uv$ is also a Lyndon
word with $u<uv<v$.

To facilitate the presentation, variable $u$ stores a phrase, that is, the
occurrence of a Lyndon factor of $y$ though the position of the factor is not
explicitly given, and $T(u)$ is the Lyndon tree associated with this
occurrence. Idem for $v$.

\medskip
\begin{algo}{LyndonTree}{y \textrm{ Lyndon word of length } n}
  \SET{(v,T(v))}{(y[n-1],(y[n-1]))}
  \DOFORD{i}{n-2}{0}
    \SET{(u,T(u))}{(y[i],(y[i]))}
    \DOWHILE{u < v}  \label{algo-LT:4}
      \SET{T(uv)}{(\mbox{new node},T(u),T(v))} \label{algo-LT:5}
      \SET{u}{uv} \label{lynt:7}
      \SET{v}{\textrm{next phrase, empty word if none}} \label{algo-LT:7}
    \OD
  \OD
  \RETURN{T(y)}
\end{algo}

\medskip
If the comparison $u<v$ at line~\ref{algo-LT:4} is done by mere letter
comparisons, the algorithm may run in quadratic time, for example if applied
on $y = \texttt{a}^k\texttt{b}\texttt{a}^k\texttt{c}$ (each factor
$\texttt{a}^i\texttt{b}$ is compared with the prefix $\texttt{a}^{i+1}$ of
$\texttt{a}^k\texttt{c}$ or with $\texttt{a}^k\texttt{c}$ itself).

However the algorithm can be implemented to run in linear time if the test
$u<v$ at line~\ref{algo-LT:4} is done in constant time because each execution
of instructions at lines~\ref{algo-LT:5}-\ref{algo-LT:7} decreases the number
of Lyndon phrases, which goes from $n$ to $1$.


\def\rx{0.6 }
\begin{center}
\begin{pspicture}[showgrid=false](-1,-16)(11,1)

\psset{linewidth=1.5pt}
  \psframe[framearc=0.4](! -.3 0 add \rx mul -.5)(! 0.3 16 add \rx mul 0.5)

  \psframe[framearc=0.4](! -.3 0 add \rx mul -.5 -2 add)
  (! 0.3 13 add \rx mul 0.5 -2 add)
  \psframe[framearc=0.4](! -.3 14 add \rx mul -.5 -2 add)
  (! 0.3 16 add \rx mul 0.5 -2 add)

  \psframe[framearc=0.4](! -.3 0 add \rx mul -.5 -4 add)
  (! 0.3 5 add \rx mul 0.5 -4 add)
  \psframe[framearc=0.4](! -.3 6 add \rx mul -.5 -4 add)
  (! 0.3 13 add \rx mul 0.5 -4 add)
  \psframe[framearc=0.4](! -.3 14 add \rx mul -.5 -4 add)
  (! 0.3 14 add \rx mul 0.5 -4 add)
  \psframe[framearc=0.4](! -.3 15 add \rx mul -.5 -4 add)
  (! 0.3 16 add \rx mul 0.5 -4 add)

 \psframe[framearc=0.4](! -.3 0 add \rx mul -.5 -6 add)
 (! 0.3 3 add \rx mul 0.5 -6 add)
 \psframe[framearc=0.4](! -.3 4 add \rx mul -.5 -6 add)
 (! 0.3 5 add \rx mul 0.5 -6 add)
  \psframe[framearc=0.4](! -.3 6 add \rx mul -.5 -6 add)
  (! 0.3 11 add \rx mul 0.5 -6 add)
  \psframe[framearc=0.4](! -.3 12 add \rx mul -.5 -6 add)
  (! 0.3 13 add \rx mul 0.5 -6 add)

  \psframe[framearc=0.4](! -.3 15 add \rx mul -.5 -6 add)
  (! 0.3 15 add \rx mul 0.5 -6 add)
  \psframe[framearc=0.4](! -.3 16 add \rx mul -.5 -6 add)
  (! 0.3 16 add \rx mul 0.5 -6 add)


\psframe[framearc=0.4](! -.3 0 add \rx mul -.5 -8 add)
 (! 0.3 0 add \rx mul 0.5 -8 add)
\psframe[framearc=0.4](! -.3 1 add \rx mul -.5 -8 add)
 (! 0.3 3 add \rx mul 0.5 -8 add)
\psframe[framearc=0.4](! -.3 4 add \rx mul -.5 -8 add)
 (! 0.3 4 add \rx mul 0.5 -8 add)
\psframe[framearc=0.4](! -.3 5 add \rx mul -.5 -8 add)
 (! 0.3 5 add \rx mul 0.5 -8 add)
\psframe[framearc=0.4](! -.3 6 add \rx mul -.5 -8 add)
 (! 0.3 6 add \rx mul 0.5 -8 add)
\psframe[framearc=0.4](! -.3 7 add \rx mul -.5 -8 add)
 (! 0.3 11 add \rx mul 0.5 -8 add)
  \psframe[framearc=0.4](! -.3 12 add \rx mul -.5 -8 add)
  (! 0.3 12 add \rx mul 0.5 -8 add)
  \psframe[framearc=0.4](! -.3 13 add \rx mul -.5 -8 add)
  (! 0.3 13  add \rx mul 0.5 -8 add)


\psframe[framearc=0.4](! -.3 1 add \rx mul -.5 -10 add)
(! 0.3 2 add \rx mul 0.5 -10 add)
\psframe[framearc=0.4](! -.3 3 add \rx mul -.5 -10 add)
(! 0.3 3 add \rx mul 0.5 -10 add)
\psframe[framearc=0.4](! -.3 7 add \rx mul -.5 -10 add)
(! 0.3 8 add \rx mul 0.5 -10 add)
\psframe[framearc=0.4](! -.3 9 add \rx mul -.5 -10 add)
(! 0.3 11 add \rx mul 0.5 -10 add)

\psframe[framearc=0.4](! -.3 1 add \rx mul -.5 -12 add)
(! 0.3 1 add \rx mul 0.5 -12 add)
\psframe[framearc=0.4](! -.3 2 add \rx mul -.5 -12 add)
(! 0.3 2 add \rx mul 0.5 -12 add)

\psframe[framearc=0.4](! -.3 7 add \rx mul -.5 -12 add)
(! 0.3 7 add \rx mul 0.5 -12 add)
\psframe[framearc=0.4](! -.3 8 add \rx mul -.5 -12 add)
(! 0.3 8 add \rx mul 0.5 -12 add)
\psframe[framearc=0.4](! -.3 9 add \rx mul -.5 -12 add)
(! 0.3 10 add \rx mul 0.5 -12 add)
\psframe[framearc=0.4](! -.3 11 add \rx mul -.5 -12 add)
(! 0.3 11 add \rx mul 0.5 -12 add)

\psframe[framearc=0.4](! -.3 9 add \rx mul -.5 -14 add)
(! 0.3 9 add \rx mul 0.5 -14 add)
\psframe[framearc=0.4](! -.3 10 add \rx mul -.5 -14 add)
(! 0.3 10 add \rx mul 0.5 -14 add)

\psline[linewidth=2pt](! 10 \rx mul -14 0.5 add)(! 10 \rx mul -14 1.5 add)

\psline[linewidth=2pt](! 9 \rx mul -14 0.5 add)(! 9 \rx mul -14 1.5 add)

\psline[linewidth=2pt](! 1 \rx mul -12 0.5 add)(! 1 \rx mul -12 1.5 add)
\psline[linewidth=2pt](! 2 \rx mul -12 0.5 add)(! 2 \rx mul -12 1.5 add)
\psline[linewidth=2pt](! 9 \rx mul -12 0.5 add)(! 9 \rx mul -12 1.5 add)
\psline[linewidth=2pt](! 7 \rx mul -12 0.5 add)(! 7 \rx mul -12 1.5 add)
\psline[linewidth=2pt](! 8 \rx mul -12 0.5 add)(! 8 \rx mul -12 1.5 add)
\psline[linewidth=2pt](! 11 \rx mul -12 0.5 add)(! 11 \rx mul -12 1.5 add)

\psline[linewidth=2pt](! 1 \rx mul -10 0.5 add)(! 1 \rx mul -10 1.5 add)
\psline[linewidth=2pt](! 3 \rx mul -10 0.5 add)(! 3 \rx mul -10 1.5 add)
\psline[linewidth=2pt](! 7 \rx mul -10 0.5 add)(! 7 \rx mul -10 1.5 add)
\psline[linewidth=2pt](! 11 \rx mul -10 0.5 add)(! 11 \rx mul -10 1.5 add)

\psline[linewidth=2pt](! 0 \rx mul -8 0.5 add)(! 0 \rx mul -8 1.5 add)
\psline[linewidth=2pt](! 3 \rx mul -8 0.5 add)(! 3 \rx mul -8 1.5 add)
\psline[linewidth=2pt](! 4 \rx mul -8 0.5 add)(! 4 \rx mul -8 1.5 add)
\psline[linewidth=2pt](! 5 \rx mul -8 0.5 add)(! 5 \rx mul -8 1.5 add)
\psline[linewidth=2pt](! 6 \rx mul -8 0.5 add)(! 6 \rx mul -8 1.5 add)
\psline[linewidth=2pt](! 11 \rx mul -8 0.5 add)(! 11 \rx mul -8 1.5 add)
\psline[linewidth=2pt](! 12 \rx mul -8 0.5 add)(! 12 \rx mul -8 1.5 add)
\psline[linewidth=2pt](! 13 \rx mul -8 0.5 add)(! 13 \rx mul -8 1.5 add)

\psline[linewidth=2pt](! 0 \rx mul -6 0.5 add)(! 0 \rx mul -6 1.5 add)
\psline[linewidth=2pt](! 5 \rx mul -6 0.5 add)(! 5 \rx mul -6 1.5 add)
\psline[linewidth=2pt](! 6 \rx mul -6 0.5 add)(! 6 \rx mul -6 1.5 add)
\psline[linewidth=2pt](! 13 \rx mul -6 0.5 add)(! 13 \rx mul -6 1.5 add)
\psline[linewidth=2pt](! 15 \rx mul -6 0.5 add)(! 15 \rx mul -6 1.5 add)
\psline[linewidth=2pt](! 16 \rx mul -6 0.5 add)(! 16 \rx mul -6 1.5 add)

\psline[linewidth=2pt](! 0 \rx mul -4 0.5 add)(! 0 \rx mul -4 1.5 add)
\psline[linewidth=2pt](! 13 \rx mul -4 0.5 add)(! 13 \rx mul -4 1.5 add)
\psline[linewidth=2pt](! 14 \rx mul -4 0.5 add)(! 14 \rx mul -4 1.5 add)
\psline[linewidth=2pt](! 16 \rx mul -4 0.5 add)(! 16 \rx mul -4 1.5 add)

\psline[linewidth=2pt](! 0 \rx mul -2 0.5 add)(! 0 \rx mul -2 1.5 add)
\psline[linewidth=2pt](! 16 \rx mul -2 0.5 add)(! 16 \rx mul -2 1.5 add)


\rput(! 0 \rx mul 0){\#}
\rput(! 1 \rx mul 0){a}
\rput(! 2 \rx mul 0){b}
\rput(! 3 \rx mul 0){b}
\rput(! 4 \rx mul 0){a}
\rput(! 5 \rx mul 0){b}
\rput(! 6 \rx mul 0){a}
\rput(! 7 \rx mul 0){a}
\rput(! 8 \rx mul 0){b}
\rput(! 9 \rx mul 0){a}
\rput(! 10 \rx mul 0){b}
\rput(! 11 \rx mul 0){b}
\rput(! 12 \rx mul 0){a}
\rput(! 13 \rx mul 0){b}
\rput(! 14 \rx mul 0){a}
\rput(! 15 \rx mul 0){a}
\rput(! 16 \rx mul 0){b}

\rput(! 0 \rx mul -2){\#}
\rput(! 1 \rx mul -2){a}
\rput(! 2 \rx mul -2){b}
\rput(! 3 \rx mul -2){b}
\rput(! 4 \rx mul -2){a}
\rput(! 5 \rx mul -2){b}
\rput(! 6 \rx mul -2){a}
\rput(! 7 \rx mul -2){a}
\rput(! 8 \rx mul -2){b}
\rput(! 9 \rx mul -2){a}
\rput(! 10 \rx mul -2){b}
\rput(! 11 \rx mul -2){b}
\rput(! 12 \rx mul -2){a}
\rput(! 13 \rx mul -2){b}
\rput(! 14 \rx mul -2){a}
\rput(! 15 \rx mul -2){a}
\rput(! 16 \rx mul -2){b}

\rput(! 0 \rx mul -4){\#}
\rput(! 1 \rx mul -4){a}
\rput(! 2 \rx mul -4){b}
\rput(! 3 \rx mul -4){b}
\rput(! 4 \rx mul -4){a}
\rput(! 5 \rx mul -4){b}
\rput(! 6 \rx mul -4){a}
\rput(! 7 \rx mul -4){a}
\rput(! 8 \rx mul -4){b}
\rput(! 9 \rx mul -4){a}
\rput(! 10 \rx mul -4){b}
\rput(! 11 \rx mul -4){b}
\rput(! 12 \rx mul -4){a}
\rput(! 13 \rx mul -4){b}
\rput(! 14 \rx mul -4){a}
\rput(! 15 \rx mul -4){a}
\rput(! 16 \rx mul -4){b}

\rput(! 0 \rx mul -6){\#}
\rput(! 1 \rx mul -6){a}
\rput(! 2 \rx mul -6){b}
\rput(! 3 \rx mul -6){b}
\rput(! 4 \rx mul -6){a}
\rput(! 5 \rx mul -6){b}
\rput(! 6 \rx mul -6){a}
\rput(! 7 \rx mul -6){a}
\rput(! 8 \rx mul -6){b}
\rput(! 9 \rx mul -6){a}
\rput(! 10 \rx mul -6){b}
\rput(! 11 \rx mul -6){b}
\rput(! 12 \rx mul -6){a}
\rput(! 13 \rx mul -6){b}
\rput(! 15 \rx mul -6){a}
\rput(! 16 \rx mul -6){b}

\rput(! 0 \rx mul -8){\#}
\rput(! 1 \rx mul -8){a}
\rput(! 2 \rx mul -8){b}
\rput(! 3 \rx mul -8){b}
\rput(! 4 \rx mul -8){a}
\rput(! 5 \rx mul -8){b}
\rput(! 6 \rx mul -8){a}
\rput(! 7 \rx mul -8){a}
\rput(! 8 \rx mul -8){b}
\rput(! 9 \rx mul -8){a}
\rput(! 10 \rx mul -8){b}
\rput(! 11 \rx mul -8){b}
\rput(! 12 \rx mul -8){a}
\rput(! 13 \rx mul -8){b}

\rput(! 1 \rx mul -10){a}
\rput(! 2 \rx mul -10){b}
\rput(! 3 \rx mul -10){b}
\rput(! 7 \rx mul -10){a}
\rput(! 8 \rx mul -10){b}
\rput(! 9 \rx mul -10){a}
\rput(! 10 \rx mul -10){b}
\rput(! 11 \rx mul -10){b}

\rput(! 1 \rx mul -12){a}
\rput(! 2 \rx mul -12){b}
\rput(! 7 \rx mul -12){a}
\rput(! 8 \rx mul -12){b}
\rput(! 9 \rx mul -12){a}
\rput(! 10 \rx mul -12){b}
\rput(! 11 \rx mul -12){b}

\rput(! 9 \rx mul -14){a}
\rput(! 10 \rx mul -14){b}

\rput(! -0.5 \rx mul -15.5){$\Lyn[i]$}
\rput(! 1 \rx mul -15.5){3}
\rput(! 2 \rx mul -15.5){1}
\rput(! 3 \rx mul -15.5){1}
\rput(! 4 \rx mul -15.5){2}
\rput(! 5 \rx mul -15.5){1}
\rput(! 6 \rx mul -15.5){8}
\rput(! 7 \rx mul -15.5){5}
\rput(! 8 \rx mul -15.5){1}
\rput(! 9 \rx mul -15.5){3}
\rput(! 10 \rx mul -15.5){1}
\rput(! 11 \rx mul -15.5){1}
\rput(! 12 \rx mul -15.5){2}
\rput(! 13 \rx mul -15.5){1}
\rput(! 14 \rx mul -15.5){3}
\rput(! 15 \rx mul -15.5){2}
\rput(! 16 \rx mul -15.5){1}


\psline[linestyle=dotted](! 1 \rx mul -12 -0.5 add)(! 1 \rx mul -15)
\psline[linestyle=dotted](! 2 \rx mul -12  -0.5 add)(! 2 \rx mul -15)
\psline[linestyle=dotted](! 3 \rx mul -10  -0.5 add)(! 3 \rx mul -15)
\psline[linestyle=dotted](! 4 \rx mul -8  -0.5 add)(! 4 \rx mul -15)
\psline[linestyle=dotted](! 5 \rx mul -8  -0.5 add)(! 5 \rx mul -15)
\psline[linestyle=dotted](! 6 \rx mul -8  -0.5 add)(! 6 \rx mul -15)
\psline[linestyle=dotted](! 7 \rx mul -12  -0.5 add)(! 7 \rx mul -15)
\psline[linestyle=dotted](! 8 \rx mul -12  -0.5 add)(! 8 \rx mul -15)
\psline[linestyle=dotted](! 9 \rx mul -14  -0.5 add)(! 9 \rx mul -15)
\psline[linestyle=dotted](! 10 \rx mul -14  -0.5 add)(! 10 \rx mul -15)
\psline[linestyle=dotted](! 11 \rx mul -12  -0.5 add)(! 11 \rx mul -15)
\psline[linestyle=dotted](! 12 \rx mul -8  -0.5 add)(! 12 \rx mul -15)
\psline[linestyle=dotted](! 13 \rx mul -8  -0.5 add)(! 13 \rx mul -15)
\psline[linestyle=dotted](! 14 \rx mul -4  -0.5 add)(! 14 \rx mul -15)
\psline[linestyle=dotted](! 15 \rx mul -6  -0.5 add)(! 15 \rx mul -15)
\psline[linestyle=dotted](! 16 \rx mul -6  -0.5 add)(! 16 \rx mul -15)

\rput(! 0 1 add 0.2 add \rx mul -15){\scriptsize 0}
\rput(! 1 1 add 0.2 add \rx mul -15){\scriptsize 1}
\rput(! 2 1 add 0.2 add \rx mul -15){\scriptsize 2}
\rput(! 3 1 add 0.2 add \rx mul -15){\scriptsize 3}
\rput(! 4 1 add 0.2 add \rx mul -15){\scriptsize 4}
\rput(! 5 1 add 0.2 add \rx mul -15){\scriptsize 5}
\rput(! 6 1 add 0.2 add \rx mul -15){\scriptsize 6}
\rput(! 7 1 add 0.2 add \rx mul -15){\scriptsize 7}
\rput(! 8 1 add 0.2 add \rx mul -15){\scriptsize 8}
\rput(! 9 1 add 0.2 add \rx mul -15){\scriptsize 9}
\rput(! 10 1 add 0.2 add \rx mul -15){\scriptsize 10}
\rput(! 11 1 add 0.2 add \rx mul -15){\scriptsize 11}
\rput(! 12 1 add 0.2 add \rx mul -15){\scriptsize 12}
\rput(! 13 1 add 0.2 add \rx mul -15){\scriptsize 13}
\rput(! 14 1 add 0.2 add \rx mul -15){\scriptsize 14}
\rput(! 15 1 add 0.2 add \rx mul -15){\scriptsize 15}

\end{pspicture}
\end{center}

\paragraph{Lyndon table}

The Lyndon table $\Lyn$ of a (non-empty) word $y$ is defined as follows. For
a position $i$ on $y$, $i=0,\dots,|y|-1$, $\Lyn[i]$ is the length of the
longest Lyndon factor of $y$ starting at $i$:
\[\Lyn[i] = \max\{\ell \mid y[i\dd i+\ell-1] \mbox{ is a Lyndon word}\}.\]

\noindent
\fbox{\begin{picture}(290,33)(-50,10)
\put(-25,36){\makebox(20,10)[br]{\small $i$}}
\put( 0,36){\makebox(15,10)[b]{\scriptsize 0}}
\put(15,36){\makebox(15,10)[b]{\scriptsize 1}}
\put(30,36){\makebox(15,10)[b]{\scriptsize 2}}
\put(45,36){\makebox(15,10)[b]{\scriptsize 3}}
\put(60,36){\makebox(15,10)[b]{\scriptsize 4}}
\put(75,36){\makebox(15,10)[b]{\scriptsize 5}}
\put(90,36){\makebox(15,10)[b]{\scriptsize 6}}
\put(105,36){\makebox(15,10)[b]{\scriptsize 7}}
\put(120,36){\makebox(15,10)[b]{\scriptsize 8}}
\put(135,36){\makebox(15,10)[b]{\scriptsize 9}}
\put(150,36){\makebox(15,10)[b]{\scriptsize 10}}
\put(165,36){\makebox(15,10)[b]{\scriptsize 11}}
\put(180,36){\makebox(15,10)[b]{\scriptsize 12}}
\put(195,36){\makebox(15,10)[b]{\scriptsize 13}}
\put(210,36){\makebox(15,10)[b]{\scriptsize 14}}
\put(225,36){\makebox(15,10)[b]{\scriptsize 15}}

\put(-25,22){\makebox(20,10)[br]{\small $y[i]$}}
\put( 0,24){\makebox(15,10)[b]{\sa a}}
\put(15,24){\makebox(15,10)[b]{\sa b}}
\put(30,24){\makebox(15,10)[b]{\sa b}}
\put(45,24){\makebox(15,10)[b]{\sa a}}
\put(60,24){\makebox(15,10)[b]{\sa b}}
\put(75,24){\makebox(15,10)[b]{\sa a}}
\put(90,24){\makebox(15,10)[b]{\sa a}}
\put(105,24){\makebox(15,10)[b]{\sa b}}
\put(120,24){\makebox(15,10)[b]{\sa a}}
\put(135,24){\makebox(15,10)[b]{\sa b}}
\put(150,24){\makebox(15,10)[b]{\sa b}}
\put(165,24){\makebox(15,10)[b]{\sa a}}
\put(180,24){\makebox(15,10)[b]{\sa b}}
\put(195,24){\makebox(15,10)[b]{\sa a}}
\put(210,24){\makebox(15,10)[b]{\sa a}}
\put(225,24){\makebox(15,10)[b]{\sa b}}

\put(-25,10){\makebox(20,10)[br]{\small $\Lyn[i]$}}
\put( 0,12){\makebox(15,10)[b]{3}}
\put(15,12){\makebox(15,10)[b]{1}}
\put(30,12){\makebox(15,10)[b]{1}}
\put(45,12){\makebox(15,10)[b]{2}}
\put(60,12){\makebox(15,10)[b]{1}}
\put(75,12){\makebox(15,10)[b]{8}}
\put(90,12){\makebox(15,10)[b]{5}}
\put(105,12){\makebox(15,10)[b]{1}}
\put(120,12){\makebox(15,10)[b]{3}}
\put(135,12){\makebox(15,10)[b]{1}}
\put(150,12){\makebox(15,10)[b]{1}}
\put(165,12){\makebox(15,10)[b]{2}}
\put(180,12){\makebox(15,10)[b]{1}}
\put(195,12){\makebox(15,10)[b]{3}}
\put(210,12){\makebox(15,10)[b]{2}}
\put(225,12){\makebox(15,10)[b]{1}}
\end{picture}}

\medskip
The computation of the Lyndon table is an offspring of the previous
algorithm, like the computation of the Next nearest smaller table for
Algorithm \textsc{CartesianTree}. Algorithm \textsc{LongestLyndon} computes
$\Lyn$ using the same right-to-left detection of Lyndon factors as above.

\medskip
\begin{algo}{LongestLyndon}{y \textrm{ non-empty word of length } n}
    \DOFORD{i}{n-1}{0}
        \SET{(\Lyn[i],j)}{(1,i+1)}
        \DOWHILE{j<n \mbox{ and } y[i\dd j-1] < y[j\dd j+\Lyn[j]-1]}
            \label{algo-LL:3}
            \SET{(\Lyn[i],j)}{(\Lyn[i]+\Lyn[j],j+\Lyn[j])}
        \OD
    \OD
    \RETURN{\Lyn}
\end{algo}

\section{Key property}\label{sect:kp}

It is clear that the previous algorithms all share the same algorithmic
structure. The link between the trees or their reduced versions is even
tighter when the running time of the Lyndon tree construction is concerned.
Indeed, the comparison between two consecutive phrases of the factorisation
of $y$ at line~\ref{algo-LT:4} in \textsc{LyndonTree} or at
line~\ref{algo-LL:3} in \textsc{LongestLyndon} comes back to considering the
ranks of suffixes in alphabetic order. This is shown by the next proposition
where the local comparison between two phrases is shown to be equivalent to
the comparison of their associated suffixes.

In addition, the next statement also leads to prove that the Lyndon tree of
$y$, possibly reduced to its internal nodes, has the same structure than the
Cartesion tree built from the ranks of the word suffixes, which has been
first noticed by Hohlweg and Reutenauer in \cite{HohlwegR03}.


\begin{proposition}\label{prop-main}
Let $u$ be a Lyndon word and $v\cdot v_1\cdot v_2\cdots v_m$ be the Lyndon
factorisation of a word $w$. Then $u < v$ iff $uw < w$.
\end{proposition}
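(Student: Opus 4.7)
The plan is to prove the biconditional by comparing $uw$ and $w$ character by character, splitting into cases on the prefix relationship between $u$ and $v$. Write $w' = v_1 v_2 \cdots v_m$ so $w = v w'$, and recall that the Lyndon factorisation gives $v \geq v_1 \geq \cdots \geq v_m$.

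A useful structural lemma for Lyndon words that I would establish first is: \emph{if $t$ is Lyndon and $t = xy$ with $x, y$ both nonempty, then $y > x$ lexicographically.} This follows from the defining property $t < y$ by inspecting the first position $k$ of disagreement between $xy$ and $y$ and splitting on whether $k$ lies within $x$ or beyond: in the first case $x[k] < y[k]$ directly, and in the second $x$ is forced to be a proper prefix of $y$.

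If neither of $u, v$ is a prefix of the other, they differ within the first $\min(|u|,|v|)$ characters, and the same position decides both $u < v$ and $uw < w$; the biconditional is immediate. If $u$ is a proper prefix of $v$, write $v = us$ with $s$ nonempty: automatically $u < v$, and the Lyndon property $v < s$ (plus $|s| < |v|$, which rules out $v$ and $s$ being prefix-comparable) gives $vw' < sw'$, whence $uw = u(vw') < u(sw') = w$. The case $u = v$ I would handle by induction on $m$: after matching the common prefix $v$, the comparison reduces to that of $vw'$ versus $w'$, and the inductive hypothesis applied to the pair $(v, w')$---whose first Lyndon factor is $v_1$---combined with $v \geq v_1$ yields $vw' \geq w'$, hence $uw \geq w$.

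The principal obstacle is the case where $v$ is a proper prefix of $u$: write $u = vr$ with $r$ nonempty. The Lemma applied to $u = vr$ gives $r > v \geq v_1$. After matching the common prefix $v$, one compares $rw$ with $w'$. If $r$ and $v_1$ differ within their common length, $rw > w'$; otherwise $v_1$ is a proper prefix of $r$, so $r = v_1 r''$, and the Lemma applied to the decomposition $u = (vv_1)r''$ gives $r'' > vv_1 > v_2$ (the second inequality holding because $vv_1$ starts with $v \geq v_2$ and strictly extends $v$). The argument iterates: at level $k$ the Lemma applied to $u = (vv_1 \cdots v_{k-1}) r^{(k-1)}$ gives $r^{(k-1)} > vv_1 \cdots v_{k-1} > v_k$, and we either conclude or peel off $v_k$. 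The iteration terminates in at most $m$ steps: if we reach level $m+1$, then $r^{(m)}$ must be nonempty (otherwise $u = vv_1 \cdots v_m = w$, contradicting $u$ being Lyndon when $m \geq 1$), and the comparison of $r^{(m)} w$ with the now-exhausted tail of $w'$ yields $uw > w$.
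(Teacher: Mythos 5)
Your proof is correct, and in both non-trivial cases it takes a genuinely different route from the paper's. Where $u$ is a proper prefix of $v$, the paper extracts the maximal power $v=u^e z$ and uses border-freeness of $v$ to obtain $u\sle z$ and hence $u^{e+1}\sle v$; you simply write $v=us$ and apply your prefix-versus-suffix lemma to get $v\sle s$, which is shorter and avoids the exponent $e$ entirely. Where $v$ is a proper prefix of $u=vr$, the paper argues in one shot, using the characterisation of $v$ as the longest Lyndon prefix of $w$ to exclude $r$ being a prefix of $v_1\cdots v_m$, and then aligning a suffix of $r$ against some factor $v_k$ to contradict $u$ being Lyndon; you instead peel the factors $v_1,v_2,\dots$ off $r$ one level at a time, reapplying the lemma to $u=(vv_1\cdots v_{k-1})r^{(k-1)}$. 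This is longer, but it needs only the non-increasing ordering $v\ge v_1\ge\cdots\ge v_m$ rather than the longest-Lyndon-prefix property of the factorisation, and it reaches the conclusion constructively rather than by contradiction. You also treat $u=v$ explicitly by induction on $m$; the paper's ``It remains to consider the situations where $v$ is a prefix of $u$. If it is a proper prefix\dots'' silently drops this case, even though $vw'\ge w'$ does require an argument, so your treatment is actually more complete. Two small points to tighten: in the decomposition $v=us$, the bound $|s|<|v|$ only rules out $v$ being a prefix of $s$; that $s$ is not a prefix of $v$ follows from $v<s$ (or from border-freeness of $v$). And the induction on $m$ should record its base case $m=0$, where $w'$ is empty and $vw'=v>\varepsilon=w'$ holds trivially.
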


\begin{proof}
Let us consider the different cases.

Assume first $u < v$. If $u\sle v$ then $uw \sle vv_1v_2\cdots v_m= w$.

Consider the case where $u$ is a proper prefix of $v$. Let $e>0$ be the
largest integer for which $v=u^ez$. Since $v$ is a Lyndon word, $z$ is not
empty and we have $u^e<z$. Since $u$ is not a prefix of $z$ (by definition of
$e$) nor $z$ a prefix of $u$ (because $v$ is border-free) we have $u\sle z$.
This implies $u^{e+1}\sle u^ez = v$ and then $uw<w$.

Then assume $v\leq u$. If $v\sle u$ we have obviously $w<uw$.

It remains to consider the situations where $v$ is a prefix of $u$. If it is
a proper prefix, $u$ writes $vz$ for a non-empty word $z$. We have $v<z$
because $u$ is a Lyndon word. The word $z$ cannot be a prefix of $t=
v_1v_2\cdots v_m$ because $v$ would not be the longest Lyndon prefix of $w$,
a contradiction with a property of the factorisation. Thus, either $t\leq z$
or $z\sle t$. In the first case, if $t$ is a prefix of $z$, $w=vt$ is a
prefix of $u$ and then of $uw$, that is, $w<uw$. In the second case, for some
suffix $z'$ of $z$ and some factor $v_k$ of $t$ we have $z'\sle v_k$. The
factorisation implies $v_k \leq v$. Therefore, the suffix $z'$ of $u$ is
smaller than its prefix $v$, a contradiction with the fact that $u$ is a
Lyndon word.
\end{proof}

For each position $i$ on $y$, $i=0,\dots,|y|-1$, let $\Rank[i]$ be the rank
of the suffix $x[i\dd |y|-1]$ is the increasing alphabetic list of all
non-empty suffixes of $y$ (ranks run from $0$ to $|y|-1$).

\medskip
\noindent
\fbox{\begin{picture}(290,42)(-50,0)
\put(-25,36){\makebox(20,10)[br]{\small $i$}}
\put( 0,36){\makebox(15,10)[b]{\scriptsize 0}}
\put(15,36){\makebox(15,10)[b]{\scriptsize 1}}
\put(30,36){\makebox(15,10)[b]{\scriptsize 2}}
\put(45,36){\makebox(15,10)[b]{\scriptsize 3}}
\put(60,36){\makebox(15,10)[b]{\scriptsize 4}}
\put(75,36){\makebox(15,10)[b]{\scriptsize 5}}
\put(90,36){\makebox(15,10)[b]{\scriptsize 6}}
\put(105,36){\makebox(15,10)[b]{\scriptsize 7}}
\put(120,36){\makebox(15,10)[b]{\scriptsize 8}}
\put(135,36){\makebox(15,10)[b]{\scriptsize 9}}
\put(150,36){\makebox(15,10)[b]{\scriptsize 10}}
\put(165,36){\makebox(15,10)[b]{\scriptsize 11}}
\put(180,36){\makebox(15,10)[b]{\scriptsize 12}}
\put(195,36){\makebox(15,10)[b]{\scriptsize 13}}
\put(210,36){\makebox(15,10)[b]{\scriptsize 14}}
\put(225,36){\makebox(15,10)[b]{\scriptsize 15}}

\put(-25,22){\makebox(20,10)[br]{\small $y[i]$}}
\put( 0,24){\makebox(15,10)[b]{\sa a}}
\put(15,24){\makebox(15,10)[b]{\sa b}}
\put(30,24){\makebox(15,10)[b]{\sa b}}
\put(45,24){\makebox(15,10)[b]{\sa a}}
\put(60,24){\makebox(15,10)[b]{\sa b}}
\put(75,24){\makebox(15,10)[b]{\sa a}}
\put(90,24){\makebox(15,10)[b]{\sa a}}
\put(105,24){\makebox(15,10)[b]{\sa b}}
\put(120,24){\makebox(15,10)[b]{\sa a}}
\put(135,24){\makebox(15,10)[b]{\sa b}}
\put(150,24){\makebox(15,10)[b]{\sa b}}
\put(165,24){\makebox(15,10)[b]{\sa a}}
\put(180,24){\makebox(15,10)[b]{\sa b}}
\put(195,24){\makebox(15,10)[b]{\sa a}}
\put(210,24){\makebox(15,10)[b]{\sa a}}
\put(225,24){\makebox(15,10)[b]{\sa b}}

\put(-25,10){\makebox(20,10)[br]{\small $\Lyn[i]$}}
\put( 0,12){\makebox(15,10)[b]{3}}
\put(15,12){\makebox(15,10)[b]{1}}
\put(30,12){\makebox(15,10)[b]{1}}
\put(45,12){\makebox(15,10)[b]{2}}
\put(60,12){\makebox(15,10)[b]{1}}
\put(75,12){\makebox(15,10)[b]{8}}
\put(90,12){\makebox(15,10)[b]{5}}
\put(105,12){\makebox(15,10)[b]{1}}
\put(120,12){\makebox(15,10)[b]{3}}
\put(135,12){\makebox(15,10)[b]{1}}
\put(150,12){\makebox(15,10)[b]{1}}
\put(165,12){\makebox(15,10)[b]{2}}
\put(180,12){\makebox(15,10)[b]{1}}
\put(195,12){\makebox(15,10)[b]{3}}
\put(210,12){\makebox(15,10)[b]{2}}
\put(225,12){\makebox(15,10)[b]{1}}

\put(-25,-2){\makebox(20,10)[br]{\small $\Rank[i]$}}
\put( 0, 0){\makebox(15,10)[b]{7}}
\put(15, 0){\makebox(15,10)[b]{15}}
\put(30, 0){\makebox(15,10)[b]{12}}
\put(45, 0){\makebox(15,10)[b]{4}}
\put(60, 0){\makebox(15,10)[b]{10}}
\put(75, 0){\makebox(15,10)[b]{1}}
\put(90, 0){\makebox(15,10)[b]{5}}
\put(105, 0){\makebox(15,10)[b]{13}}
\put(120, 0){\makebox(15,10)[b]{6}}
\put(135, 0){\makebox(15,10)[b]{14}}
\put(150, 0){\makebox(15,10)[b]{11}}
\put(165, 0){\makebox(15,10)[b]{3}}
\put(180, 0){\makebox(15,10)[b]{9}}
\put(195, 0){\makebox(15,10)[b]{0}}
\put(210, 0){\makebox(15,10)[b]{2}}
\put(225, 0){\makebox(15,10)[b]{8}}
\end{picture}}

\medskip
Applying the above property to update line~\ref{algo-LL:3}, Algorithm
\textsc{LongestLyndon} rewrites as follows, where tables $\Lyn$ and $\Rank$
concern the input word $y$.

\medskip
\begin{algo}{LongestLyndon}{y \textrm{ non-empty word of length } n}
\DOFORD{i}{n-1}{0}
  \SET{(\Lyn[i],j)}{(1,i+1)}
  \DOWHILE{j<n \mbox{ and } \Rank[i]<\Rank[j]}
    \SET{(\Lyn[i],j)}{(\Lyn[i]+\Lyn[j],j+\Lyn[j])}
  \OD
\OD \RETURN{\Lyn}
\end{algo}

\medskip
As for the running time, when the table $\Rank$ is precomputed, the
comparison of words at line~\ref{algo-LL:3} can be realised in constant time.
And since the number of comparisons is no more than $2|x|-2$ (exactly $n-1$
negative comparisons that stop the while loop and no more than $n-1$ positive
comparisons since each reduces the number of Lyndon factors in the overall
factorisation of $y$), the total running time is linear.

Note the Lyndon factorisation of $y$ can be recovered by following the
longest decreasing sequence of ranks from the first rank. It is $(7,4,3,1,0)$
in the above example, corresponding to positions $(0,3,5,7,15)$ and to the
Lyndon factorisation $\texttt{abb} \cdot \texttt{ab} \cdot \texttt{ab} \cdot
\texttt{aababbab} \cdot \texttt{a}$.

Also note the relation between $\Lyn$ and $\NNS$: $\NNS[i]=i+\Lyn[i]$, since
$\Lyn[i]$ is the smallest distance to a next rank value smaller than
$\Rank[i]$.

\section{Computing runs}\label{sect:runs}

Algorithm \textsc{LongestLyndon} extends to an algorithm for computing
efficiently all runs occurring in a word.

Recall that a run in the word $y$ is an occurrence of a factor, say $y[i\dd
j]$, whose length is at least twice its (smallest) period. The main result in
\cite{BannaiIINTT15} shows that a run can be identified with a special
position $s$ on $y$ for which $\Lyn[s]$ is the period of $y[i\dd j]$ and
$2\times\Lyn[s] \leq j-i+1$, considering some alphabet ordering or its
inverse.

To compute all runs of the word $y$, we just have to check if the longest
Lyndon factor starting at $i$ produces a special position of a run. This is
done by extending the Lyndon factor to the left and to the right according to
the period of the resulting factor and using Longest common extensions. This is
done by computing $r=\LCE_{\rm R}(i,i+\Lyn[i])$ and $\ell=\LCE_{\rm
L}(i-1,i+\Lyn[i]-1)$ when appropriate and verifying if $\ell+r\geq \Lyn[i]$.
If the inequality holds a run can be reported. In the algorithm below we
assume $\ell$ to be set to null if $i=0$ and $r$ to null also if
$i+\Lyn[i]=n$.

\medskip
\begin{algo}{Runs}{y \textrm{ non-empty word of length } n}
  \DOFORD{i}{n-1}{0}
    \SET{(\Lyn[i],j)}{(1,i+1)}
    \DOWHILE{j<n \mbox{ and } \Rank[i] < \Rank[j]}\label{alg2-3}
        \label{exer142-alg-3}
      \SET{(\Lyn[i],j)}{(\Lyn[i]+\Lyn[j],j+\Lyn[j])}
    \OD
    \SET{(\ell,r)}{(\LCE_{\rm L}(i-1,i+\Lyn[i]-1),\LCE_{\rm R}(i,i+\Lyn[i]))}
        \label{algo-Runs:5}
    \IF{\ell+r \geq \Lyn[i]}
      \ACT"{output run $x[i-\ell\dd i+\Lyn[i]+r-1]$}
    \FI
  \OD
\end{algo}

\medskip
To locate all runs, Algorithm \textsc{Runs} has to be executed twice, for the
tables corresponding to some alphabet ordering and for the tables
corresponding to the inverse alphabet ordering.

\paragraph{Running time of \textsc{Runs}}

Algorithm \textsc{Runs} can be implemented to run in linear time $O(|y|)$
when the alphabet is linearly-sortable.

Indeed, with the hypothesis, it is known that suffixes of $y$ can be sorted
in linear time (see for example \cite{CHL07cup}). Then also the table $\Rank$
that is just the inverse of the sorted list of starting positions of the
suffixes.

Again with the hypothesis, LCE queries at line~\ref{algo-Runs:5} can be
executed in constant time after a linear-time preprocessing. The reader can
refer to the review by Fischer and Heun \cite{FischerH06} concerning LCE
queries. More advanced techniques to implement them over a general alphabet
and to compute runs can be found in
\cite{GawrychowskiKRW16,CrochemoreIKKPR16} and references therein.

Therefore the whole algorithm \textsc{Runs} runs in linear time when the
alphabet is linearly-sortable.

\section{Concluding remarks}

The relation between suffix sorting, part of the suffix array, and Lyndon
factorisation is examined by Mantaci, Restivo, Rosone and Sciortino in
\cite{MantaciRRS14}. Franek, Islam, Rahman and Smyth present several
algorithms to compute the Lyndon table in \cite{FranekIRS16}.

The structure of the Cartesian tree with its nodes labelled by numbers is
richer than the structure of the Lyndon tree because it seems difficult to
recover the labels without completely sorting the ranks of suffixes. This
question is certainly related to the application of Cartesian to sorting (see
for example \cite{LevcopoulosP89}).


\end{document}